\newtheorem{theorem}{\bf Theorem}
\newtheorem{proposition}{\bf Proposition}
\newtheorem{definition}{\bf Definition}
\newtheorem{remark}{\bf Remark}
\newtheorem{example}{\bf Example}
\newtheorem{lemma}{\bf Lemma}
\newtheorem{corollary}{\bf Corollary}
\definecolor{darkgray}{gray}{0.2}
\definecolor{hdarkgray}{gray}{0.4}
\definecolor{llightgray}{gray}{0.6}
\definecolor{lightgray}{gray}{0.8}
\newcommand{\abs}[1]{\left\lvert#1\right\rvert}
\newcommand{\norm}[1]{\left\lvert\left\lvert#1\right\rvert\right\rvert}
\newcommand{\barpow}[1]{\left\lfloor#1\right\rceil}
\newcommand{\barpowvec}[1]{\abs{\barpow{#1}}}
\def\BibTeX{{\rm B\kern-.05em{\sc i\kern-.025em b}\kern-.08em
    T\kern-.1667em\lower.7ex\hbox{E}\kern-.125emX}}
\begin{document}
\title{A Lyapunov-like Characterization of Predefined-Time Stability}
\author{Esteban~Jim\'enez-Rodr\'iguez, 
    	Aldo~Jonathan~Mu\~noz-V\'azquez, 
        Juan~Diego~S\'anchez-Torres, 
        Michael~Defoort and~Alexander~G.~Loukianov
\thanks{Submitted on: 11/19/2018.}
\thanks{E.~Jim\'enez-Rodr\'iguez and A.~G.~Loukianov are with the Department of Electrical Engineering, Cinvestav-Guadalajara, Zapopan, Jalisco, 45019 M\'exico (e-mail: \{ejimenezr, louk\}@gdl.cinvestav.mx). }
\thanks{A.~J.~Mu\~noz-V\'azquez is with CONACYT--M\'exico, and with the School of Engineering at the Autonomous University of Chihuahua, Chihuahua, Chihuahua, 31100 M\'exico (e-mail: aldo.munoz.vazquezz@gmail.com).}
\thanks{J.~D.~S\'anchez-Torres is with the Research Laboratory on Optimal Design, Devices and Advanced Materials -OPTIMA-, Department of Mathematics and Physics, ITESO, Tlaquepaque, Jalisco, 45604 M\'exico (e-mail: dsanchez@iteso.mx).}
\thanks{M.~Defoort is with the LAMIH, UMR CNRS 8201, Polytechnic University of Hauts-de-France, Valenciennes, 59313 France (e-mail: michael.defoort@uphf.fr).}}

\maketitle

\begin{abstract} This technical note studies Lyapunov-like conditions to ensure a class of dynamical systems to exhibit predefined-time stability. The origin of a dynamical system is predefined-time stable if it is fixed-time stable and an upper bound of the settling-time function can be \textit{arbitrarily} chosen a priori through a suitable selection of the system parameters.
We show that the studied Lyapunov-like conditions allow to demonstrate equivalence between previous Lyapunov theorems for predefined-time stability for autonomous systems. Moreover, the obtained Lyapunov-like theorem is extended for analyzing the property of predefined-time ultimate boundedness with predefined bound, which is useful when analyzing uncertain dynamical systems.
Therefore, the proposed results constitute a general framework for analyzing predefined-time stability, and they also unify a broad class of systems which present the predefined-time stability property. On the other hand, the proposed framework is used to design robust controllers for affine control systems, which induce predefined-time stability (predefined-time ultimate boundedness of the solutions) w.r.t. to some desired manifold. 
A simulation example is presented to show the behavior of a developed controller, especially regarding the settling time estimation.
\end{abstract}

\begin{IEEEkeywords} Nonlinear control systems, Predefined-time stability, Sliding mode control, Stability of nonlinear systems.
\end{IEEEkeywords}

\section{Introduction}
\label{sec:introduction}
The development of control, observation, and optimization algorithms for many industrial applications requires the fulfillment of certain time-response constraints in order to comply with a certain quality or safety standards. To deal with these requirements, several developments concerning the \textit{finite-time stability} concept have been carried out in~\cite{Roxin1966,Haimo1986,Utkin1992,Bhat2000,Moulay2008}.
Nevertheless, the finite settling time provided by finite-time convergent algorithms is usually an unbounded function of the system's initial conditions. This concern gives rise to a stronger form of stability called \textit{fixed-time stability}, where the settling-time function is bounded. The notion of fixed-time stability has been investigated in~\cite{Andrieu2008,Cruz-Zavala2010,Polyakov2012}.

Although the concept of fixed-time stability represents a significant advantage over the concept of finite-time stability because of the boundedness of the settling time, it cannot be guaranteed in general that the convergence time can be arbitrarily selected through the system tunable parameters. To overcome this mentioned drawback, it is necessary to consider another class of dynamical systems that exhibit the property of \textit{predefined-time stability}, which has been studied in~\cite{Sanchez-Torres2018,Jimenez-Rodriguez2017}. For these systems, an upper bound of the settling-time function can be arbitrarily chosen through an appropriate selection of the system parameters.

On the other hand, Lyapunov methods have proved to be a handy tool for analyzing and designing nonlinear control systems~\cite{Bacciotti2005,Khalil2000}. In the same manner, they have been highly used for convergence rate estimation in systems exhibiting finite- and fixed-time stability properties~\cite{Bhat2000,Polyakov2012,Defoort2016}, and in particular in systems with sliding modes~\cite{Polyakov2014}.
For systems exhibiting the predefined-time stability property, similar methods have been applied. For instance, different Lyapunov-like theorems for predefined-time stability were proposed in~\cite{Jimenez-Rodriguez2018a,Sanchez-Torres2018,Sanchez-Torres2018a,Aldana-Lopez2019} allowing the development of several control applications~\cite{Jimenez-Rodriguez2018,Munoz-Vazquez2019,Sanchez-Torres2019}. 

This paper investigates Lyapunov-like sufficient conditions for predefined-time stability of autonomous systems. The derived Lyapunov-like theorem allows to establish equivalence with several previous Lyapunov-like theorems~\cite{Sanchez-Torres2018,Sanchez-Torres2018a,Aldana-Lopez2019}, unifying all the past contributions in predefined-time stability for autonomous systems. Moreover, this framework is extended to the analysis of predefined-time ultimate boundedness, which is specially usefull when analyzing uncertain systems.
To demonstrate the applicability of the proposed framework, the developed results are used to design a family of continuous (respectively, discontinuous) controllers, which ensure predefined-time ultimate boundedness of the solutions to an arbitrarily small vicinity of a desired manifold (respectively, ensure predefined-time stability to a desired manifold). Finally, all the mentioned properties are validated through a simulation example, in order to show the behavior of the proposed controller, especially regarding the settling time estimation.

\section{Preliminaries}\label{sec:prelim}

\subsection{Notation} \label{subsec:notation}
Throughout the paper, the following notation is prevalent:
\begin{itemize}
\item $\mathbb{R}$ is the set of real numbers, $\mathbb{R}_+=\{x\in\mathbb{R}\,:\,x>0\}$, $\mathbb{R}_{\geq 0}=\{x\in\mathbb{R}\,:\,x\geq 0\}$ and $\Bar{\mathbb{R}}_{+}=\mathbb{R}_{+}\cup\{\infty\}$.
\item For $\bm{x}\in\mathbb{R}^n$, $\bm{x}^T$ denotes its transpose, $\norm{\bm{x}}=\sqrt{\bm{x}^T\bm{x}}$ and, for $r\in\mathbb{R}_+$, $B_r(\bm{x})=\{\bm{y}\in\mathbb{R}^n\,:\,\norm{\bm{y}-\bm{x}}<r\}$.
\item For any real number $h$, the functions $\barpow{\cdot}^h:\mathbb{R}\to\mathbb{R}$ and $\barpowvec{\bm{\cdot}}^h:\mathbb{R}^n\to\mathbb{R}^n$ are defined as $\barpow{x}^h=|x|^h\mathrm{sign}(x)$ for any $x\in \mathbb{R}\setminus\{0\}$ and $\barpowvec{\bm{x}}^{h}=\frac{\bm{x}}{\norm{\bm{x}}^{1-h}}$ for any $\bm{x}\in \mathbb{R}^n\setminus\{\bm{0}\}$, respectively. Moreover, if $h>0$, $\barpow{0}^h=0$ and $\barpowvec{\bm{0}}^{h}=\bm{0}$.
\item Whereas $\dot{\bm{x}}=\frac{d \bm{x}}{d t}$ denotes the first derivative of the function $\bm{x}:\mathbb{R}\to\mathbb{R}^n$ with respect to time, $\theta'(z) = \frac{d\theta}{dz}$ denotes the first derivative of the function $\theta:\mathbb{R}\to\mathbb{R}$ with respect to the variable $z$, which may represent anything but the time variable $t$.
\item For $\alpha,\beta\in\mathbb{R}_+$, $\Gamma(\alpha)=\int_0^\infty t^{\alpha-1}e^{-t}\text{d}t$ is the Gamma Function and $\mathcal{B}(\alpha,\beta)=\int_0^1 t^{\alpha-1}(1-t)^{\beta-1}\text{d}t$ is the Beta Function; additionally, $\gamma(\alpha,r)=\int_0^r t^{\alpha-1}e^{-t}\text{d}t$ and $P(\alpha,r)=\frac{\gamma(\alpha,r)}{\Gamma(\alpha)}$ are the Incomplete Gamma Function and the regularized Incomplete Gamma Function, respectively, which are defined for all $r\in\mathbb{R}_{\geq 0}$; finally $b(\alpha,\beta,r)=\int_0^{r} t^{\alpha-1}(1-t)^{\beta-1}\text{d}t$ and $I(\alpha,\beta,r)=\frac{b(\alpha,\beta,r)}{\mathcal{B}(\alpha,\beta)}$ are the Incomplete Beta Function and the regularized Incomplete Beta Function, respectively, which are defined for all $r\in[0,1]$~\cite{Abramowitz1964}.
\end{itemize}

\subsection{On predefined-time stability} \label{subsec:fix_pred}

Consider the following autonomous system:
\begin{equation}\label{eq:sys}
\dot{\bm{x}}=\bm{f}(\bm{x};\bm{\rho}), \ \ \bm{x}(0)=\bm{x}_0,
\end{equation}
where $\bm{x}:\mathbb{R}_{\geq 0}\to\mathbb{R}^n$ is the system state, the vector $\bm{\rho}\in\mathbb{R}^l$ stands for the \textit{tunable} parameters of~\eqref{eq:sys}.
The function $\bm{f}:\mathbb{R}^n\rightarrow\mathbb{R}^n$ may be discontinuous, and such that the solutions of~\eqref{eq:sys} exist and are unique in the sense of Filippov (see~\cite{Filippov1988} and~\cite[Proposition 5]{Cortes2008}). Thus, $\bm{\Phi}(t,\bm{x}_0)$ denotes the solution of~\eqref{eq:sys} starting from $\bm{x}_0\in\mathbb{R}^n$ at $t=0$. Moreover, the origin $\bm{x}=\bm{0}$ is the unique equilibrium point of~\eqref{eq:sys}.



\begin{remark} The parameter dependent system~\eqref{eq:sys} is equivalent to the controlled system 
\begin{equation}\label{eq1}
\dot{\bm{x}}=\bm{g}(\bm{x},\bm{u}), 
\end{equation}
where $\bm{g}:\mathbb{R}^n\times\mathbb{R}^m \to \mathbb{R}^n$, the control $\bm{u}\in\mathbb{R}^m$ is a feedback function of $\bm{x}$ with tunable parameters $\bm{\rho}$, i.e., $\bm{u}=\bm{\phi}(\bm{x};\bm{\rho})$, with $\bm{\phi}:\mathbb{R}^n\to\mathbb{R}^m$. Substituting  $\bm{u}=\bm{\phi}(\bm{x};\bm{\rho})$ in \eqref{eq1} eliminates $\bm{u}$ and yields $\bm{f}(\bm{x};\bm{\rho}):=\bm{g}(\bm{x},\bm{\phi}(\bm{x};\bm{\rho}))$.
\end{remark}

All the notions defined and treated hereafter are global, so we will omit to indicate it.

\begin{definition}[Stability notions~\cite{Lopez-Ramirez2018}] \label{def:stability} The origin of~\eqref{eq:sys} is said to be
\begin{itemize}
\item \textbf{Lyapunov stable} if for any $\bm{x}_0\in\mathbb{R}^n$, the solution $\bm{\Phi}(t,\bm{x}_0)$ is defined for all $t\geq 0$, and for any $\epsilon>0$, there is $\delta>0$ such that for any $\bm{x}_0\in\mathbb{R}^n$, if $\bm{x}_0\in B_{\delta}(\bm{0})$ then $\bm{\Phi}(t,\bm{x}_0)\in B_{\epsilon}(\bm{0})$ for all $t\geq0$;
\item \textbf{asymptotically stable} if it is Lyapunov stable and $\bm{\Phi}(t,\bm{x}_0)\to 0$ as $t\to \infty$, for any $\bm{x}_0\in\mathbb{R}^n$;
\item \textbf{finite-time stable} if it is Lyapunov stable and for any $x_0\in\mathbb{R}^n$ there exists $0\leq \tau <\infty$ such that $\bm{\Phi}(t,\bm{x}_0)=\bm{0}$ for all $t\geq \tau$. The function $T(\bm{x}_0)=\inf \left\lbrace \tau \geq 0:\bm{\Phi}(t,\bm{x}_0)=\bm{0},\, \forall t\geq \tau \right\rbrace$ is called the \textbf{settling-time function} of~\eqref{eq:sys};
\item \textbf{fixed-time stable} if it is finite-time stable and the settling-time function of~\eqref{eq:sys}, $T(\bm{x}_0)$, is bounded on $\mathbb{R}^n$, i.e. there exists $T_{\max}$ such that $\sup_{\bm{x}_0\in\mathbb{R}^n}T(\bm{x}_0)\leq T_{\max}<\infty$
\end{itemize}
\end{definition}

\begin{example}\label{exmp:fixnopred} Consider system
\begin{align}\label{eq:fixnopred}
\begin{split}
\dot{x} 
        &= -\frac{1}{\rho_1}\barpow{x}^{\rho_2}-\rho_1\barpow{x}^{2-\rho_2},
\end{split}
\end{align}
where $x\in\mathbb{R}$ is the system state, $\bm{\rho}=\left[\rho_1,\,\rho_2\right]^T\in\mathbb{R}^2$ is the vector of tunable parameters of~\eqref{eq:fixnopred}, which comply to $\rho_1>0$ and $0<\rho_2<1$. Using \cite[Lemma~1]{Polyakov2012}, one can easily show that the origin of~\eqref{eq:fixnopred} is fixed-time stable. Moreover, from~\cite[Theorem~1]{Aldana-Lopez2019}, the settling-time function of~\eqref{eq:fixnopred} satisfies
\[\sup_{x_0\in\mathbb{R}} T(x_0) = \frac{\mathcal{B}\left(1/2,1/2\right)}{2\rho_1^{-1/2}\rho_1^{1/2}(1-\rho_2)}=\frac{\pi}{2(1-\rho_2)}>\frac{\pi}{2}.\]
\end{example}

This example shows that the convergence time for system~\eqref{eq:fixnopred}, whose origin is fixed-time stable, cannot be reduced arbitrarily no matter how the parameters $\bm{\rho}$ are tuned. The case when the convergence time can be arbitrarily assigned through an appropriate tuning of the   system parameters $\bm{\rho}$ corresponds to the notion of predefined-time stability, which is defined as follows:

\begin{definition}\label{def:predefined} The origin of~\eqref{eq:sys} is said to be \textbf{predefined-time stable} if it is fixed-time stable and for any $T_c\in\mathbb{R}_+$, there exists some $\bm{\rho}\in\mathbb{R}^l$ such that the settling-time function of~\eqref{eq:sys} satisfies \[\sup_{\bm{x}_0\in\mathbb{R}^n}T(\bm{x}_0)\leq T_c.\]
\end{definition}

\begin{example}\label{exmp:pred} Consider system (see~\cite{Polyakov2012,Andrieu2008})
\begin{align}\label{eq:pred}
\begin{split}
\dot{x} 
        &= -\barpow{\rho_1\barpow{x}^{\rho_3}+\rho_2\barpow{x}^{\rho_4}}^{\rho_5},
\end{split}
\end{align}
where $x\in\mathbb{R}$ is the state of the system, $\bm{\rho}=\left[\rho_1,\,\rho_2,\,\rho_3,\,\rho_4,\,\rho_5\right]^T\in\mathbb{R}^5$ is the vector of tunable parameters of~\eqref{eq:pred}, which comply to $\rho_1,\rho_2,\rho_5>0$ and $0<\rho_5\rho_3<1<\rho_5\rho_4$. The origin of~\eqref{eq:pred} is fixed-time stable, by~\cite[Lemma~1]{Polyakov2012}. Moreover, given $T_c\in\mathbb{R}_+$, there exist $\rho_1=\rho_2=\frac{\Gamma(1/4)^4}{4\pi T_c^2}$, $\rho_3=1$, $\rho_4=3$ and $\rho_5=\frac{1}{2}$, such that the settling-time function of~\eqref{eq:pred} satisfies (see~\cite[Theorem~1]{Aldana-Lopez2019}) \[\sup_{x_0\in\mathbb{R}} T(x_0)=\frac{\Gamma(1/4)^2}{\left(\frac{\Gamma(1/4)^4}{4\pi T_c^2}\right)^{1/2}\Gamma(1/2)(3-1)} = T_c.\]
Thus, the origin of system~\eqref{eq:pred} is, in fact, predefined-time stable.
\end{example}

The following proposition is an immediate consequence of Definition~\ref{def:predefined}, of predefined-time stability.

\begin{proposition}\label{prop:fix_no_pred} If a system does not have tunable parameters, then its origin is not predefined-time stable.
\end{proposition}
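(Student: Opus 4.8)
The plan is to argue directly from the negation of Definition~\ref{def:predefined}, exploiting the fact that with no tunable parameters the settling-time function is frozen, so its supremum admits no further reduction. To prove the origin is \emph{not} predefined-time stable it suffices either to observe that it fails to be fixed-time stable (in which case there is nothing more to do), or, assuming it \emph{is} fixed-time stable, to exhibit a single $T_c\in\mathbb{R}_+$ for which no admissible parameter choice achieves $\sup_{\bm{x}_0\in\mathbb{R}^n}T(\bm{x}_0)\le T_c$. Since the system carries no tunable parameters, the vector $\bm{\rho}$ ranges over a trivial (singleton, or empty) set, so the map $\bm{x}_0\mapsto T(\bm{x}_0)$ — and hence the number $T^\ast:=\sup_{\bm{x}_0\in\mathbb{R}^n}T(\bm{x}_0)$, which is finite by fixed-time stability — is completely determined, with no freedom left to shrink it.

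The remaining step is to show $T^\ast>0$, after which one simply takes $T_c:=T^\ast/2\in\mathbb{R}_+$: the only (trivial) choice of parameters then yields $\sup_{\bm{x}_0}T(\bm{x}_0)=T^\ast>T_c$, contradicting Definition~\ref{def:predefined}. To see $T^\ast>0$, pick any $\bm{x}_0\neq\bm{0}$; since the origin is the unique equilibrium of~\eqref{eq:sys} and Filippov solutions are absolutely continuous, the trajectory $\bm{\Phi}(\cdot,\bm{x}_0)$ is continuous with $\bm{\Phi}(0,\bm{x}_0)=\bm{x}_0\neq\bm{0}$, hence $\bm{\Phi}(t,\bm{x}_0)\neq\bm{0}$ for all $t$ in some interval $[0,\varepsilon)$; therefore $T(\bm{x}_0)\ge\varepsilon>0$ and $T^\ast\ge T(\bm{x}_0)>0$.

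I expect the only subtle point to be precisely this strict positivity $T^\ast>0$: one must rule out the degenerate scenario in which every nonzero solution collapses instantaneously onto the origin, which is exactly what continuity (absolute continuity) of Filippov solutions excludes, given that the origin is the unique equilibrium. Everything else is a direct reading of the definition, together with the trivial observation that an empty parameter vector leaves the dynamics, and thus the settling-time function, unchanged.
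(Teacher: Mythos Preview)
Your argument is correct and follows the same route the paper implicitly takes: the paper does not give a proof at all, merely stating that the proposition is ``an immediate consequence of Definition~\ref{def:predefined}.'' You have supplied the details the paper omits---in particular the strict positivity $T^\ast>0$ via continuity of Filippov solutions---so your write-up is more complete than the paper's own treatment while remaining in the same spirit.
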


From Proposition~\ref{prop:fix_no_pred}, every system with fixed-time stable origin whose parameters are fixed numerical values (they are not tunable), cannot exhibit the predefined-time stability property.

On the other hand, when dealing with systems subject to uncertainties or external perturbations, it is difficult or even impossible to ensure exact convergence to the origin. Instead, it is common to provide convergence not to the origin but to a vicinity of it. In this sense, it would be useful to ensure that, not only the convergence time can be arbitrarily assigned, but also that the radius of the vicinity can be arbitrarily selected through an appropriate tuning of the parameters of the system. This notion is formally defined as follows:

\begin{definition}\label{def:ptub} A solution  $\bm{\Phi}(t,\bm{x}_0)$ of~\eqref{eq:sys} is said to be \textbf{predefined-time ultimately bounded with predefined bound} if for any $T_c,b\in\mathbb{R}_+$, there exist some $\bm{\rho}\in\mathbb{R}^l$ such that for any $\bm{x}_0\in\mathbb{R}^n$, $\norm{\bm{\Phi}(t,\bm{x}_0)}\leq b$ for all $t\geq T_c$.
\end{definition}

\subsection{Class $\mathcal{K}^1$ functions}
Inspired in the class $\mathcal{K}$ functions in~\cite[Definition 1]{Kellett2014} and~\cite[Definition 4.2]{Khalil2000}, the class $\mathcal{K}^1$ functions are defined as follows:

\begin{definition} [$\mathcal{K}^1$ functions] \label{def:class_K1} A scalar continuous function $\kappa:\mathbb{R}_{\geq 0}\to\left[0,1\right)$ is said to belong to class $\mathcal{K}^1$, denoted as $\kappa\in\mathcal{K}^1$, if it is strictly increasing, $\kappa(0)=0$ and $\kappa(r)\to 1$ as $r\to\infty$.
\end{definition}

If $\kappa\in\mathcal{K}^1$ is also differentiable, it is said to be a differentiable class $\mathcal{K}^1$ function. In such a case, there exists a continuous function $\Phi:\mathbb{R}_{\geq 0}\to\mathbb{R}_+$ such that $\frac{d\kappa}{dr}=\Phi(r)>0$.

The above can be equivalently written in an integral form as $\kappa(r)=\int_{0}^{r}\Phi(z)\text{d}z$. Since $\kappa(r)\to 1$ as $r\to\infty$, the function $\Phi$ is required to satisfy $\int_0^\infty \Phi(z)\text{d}z=1$, i.e., functions $\Phi$ and $\kappa$ can be viewed as probability density functions and cumulative distribution functions, respectively, of positive random variables.

\begin{proposition} \label{prop:K1_biject} Every class $\mathcal{K}^1$ function is bijective.
\end{proposition}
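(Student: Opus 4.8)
The plan is to show that a class $\mathcal{K}^1$ function $\kappa:\mathbb{R}_{\geq 0}\to[0,1)$ is both injective and surjective onto its codomain $[0,1)$. Injectivity is immediate: since $\kappa$ is strictly increasing by Definition~\ref{def:class_K1}, $r_1\neq r_2$ implies $\kappa(r_1)\neq\kappa(r_2)$, so $\kappa$ is one-to-one. The substance of the argument is therefore surjectivity, i.e., that the image of $\kappa$ is exactly $[0,1)$ and not a proper subset of it.

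For surjectivity I would argue as follows. First, $\kappa(0)=0$, so $0$ is attained. Now fix an arbitrary $y\in(0,1)$. Since $\kappa$ is continuous on $\mathbb{R}_{\geq 0}$, $\kappa(0)=0<y$, and $\kappa(r)\to 1$ as $r\to\infty$, there exists some $R>0$ with $\kappa(R)>y$ (pick $R$ large enough, using the limit). Then the Intermediate Value Theorem applied to $\kappa$ on the compact interval $[0,R]$ yields a point $r^\star\in(0,R)$ with $\kappa(r^\star)=y$. Hence every $y\in[0,1)$ is in the range of $\kappa$, so $\kappa$ maps onto $[0,1)$. Combined with injectivity, this gives that $\kappa$ is a bijection from $\mathbb{R}_{\geq 0}$ onto $[0,1)$.

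The only point requiring a little care — and the closest thing to an obstacle — is making sure the codomain is pinned down as $[0,1)$: the definition already restricts $\kappa$ to take values in $[0,1)$, strict monotonicity together with $\kappa(r)\to 1$ guarantees $\kappa(r)<1$ for every finite $r$ (so $1$ is genuinely excluded and never attained), and the limit condition guarantees values arbitrarily close to $1$ are attained, which is exactly what the IVT needs. No heavy machinery is involved; the proof is a direct consequence of continuity, strict monotonicity, $\kappa(0)=0$, and $\lim_{r\to\infty}\kappa(r)=1$.
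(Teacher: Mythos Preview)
Your proof is correct and follows essentially the same approach as the paper's: injectivity from strict monotonicity, surjectivity onto $[0,1)$ from continuity, $\kappa(0)=0$, and $\lim_{r\to\infty}\kappa(r)=1$. The paper's version is more terse (it simply asserts the image equals $[0,1)$ without explicitly invoking the Intermediate Value Theorem), but your added detail is entirely appropriate.
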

\begin{proof} Let $\kappa\in\mathcal{K}^1$. It is injective because $\kappa$ is continuous and strictly increasing. Moreover, since $\kappa(0)=0$, $\kappa$ is continuous and strictly increasing, and $\lim_{r\to\infty}\kappa(r)=1$, its image is $\kappa(\mathbb{R}_{\geq 0})=[0,1)$. Thus, it is surjective. Hence, it is concluded that $\kappa$ is bijective.
\end{proof}

Since every class $\mathcal{K}^1$ function is bijective, their inverse exist.


\begin{proposition} Let $\kappa\in\mathcal{K}^1$. Then:
\begin{itemize}
\item[\textit{(i)}] $\kappa^{-1}(0)=0$;
\item[\textit{(ii)}] $\kappa^{-1}$ is continuous (i.e., $\kappa$ is a homeomorphism) and strictly increasing.
\item[\textit{(iii)}] $\lim_{r\to 1^{-}}\kappa^{-1}(r)=\infty$.
\end{itemize}
\end{proposition}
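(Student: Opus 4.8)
The plan is to read off all three items directly from the properties of $\kappa$ recorded just before the statement, namely that $\kappa$ is continuous and strictly increasing, $\kappa(0)=0$, $\kappa(r)\to 1$ as $r\to\infty$, and — by Proposition~\ref{prop:K1_biject} — that $\kappa$ is a bijection from $\mathbb{R}_{\geq 0}$ onto $[0,1)$, so that $\kappa^{-1}:[0,1)\to\mathbb{R}_{\geq 0}$ is well defined.

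For \textit{(i)}, I would simply observe that $\kappa(0)=0$ together with injectivity of $\kappa$ forces $\kappa^{-1}(0)=0$. For the first part of \textit{(ii)}, strict monotonicity of $\kappa^{-1}$ follows by contraposition: if $0\le a<b<1$ but $\kappa^{-1}(a)\ge\kappa^{-1}(b)$, then applying the increasing map $\kappa$ to both sides yields $a\ge b$, a contradiction; hence $\kappa^{-1}$ is strictly increasing. For continuity I would invoke the standard fact that a monotone function on an interval is continuous if and only if its range is an interval; here the domain $[0,1)$ and the range $\kappa^{-1}([0,1))=\mathbb{R}_{\geq 0}$ are both intervals, so $\kappa^{-1}$ has no jump discontinuities and is therefore continuous (equivalently, $\kappa$ is a homeomorphism onto its image).

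For \textit{(iii)}, I would argue directly from the definition of a limit being $+\infty$. Fix an arbitrary $M>0$ and set $\delta:=1-\kappa(M)$, which is strictly positive because $\kappa(M)\in[0,1)$. Then for every $r\in(1-\delta,1)$ we have $r>\kappa(M)$, and applying the strictly increasing function $\kappa^{-1}$ gives $\kappa^{-1}(r)>\kappa^{-1}(\kappa(M))=M$. Since $M>0$ was arbitrary, this is exactly the assertion $\lim_{r\to 1^{-}}\kappa^{-1}(r)=\infty$.

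I expect the only point requiring care is the continuity claim in \textit{(ii)}: rather than an $\varepsilon$–$\delta$ computation, it is cleanest to appeal to the order-topological characterization of continuity for monotone functions (a monotone map between real intervals with interval image is continuous), or equivalently to the elementary fact that a continuous strictly increasing bijection between subintervals of $\mathbb{R}$ is automatically a homeomorphism. Everything else is immediate from bijectivity and monotonicity of $\kappa$.
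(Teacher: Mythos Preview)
Your proposal is correct and follows essentially the same approach as the paper: the paper's proof is a one-line appeal to Definition~\ref{def:class_K1} together with the standard fact that the inverse of a continuous strictly increasing function is continuous and strictly increasing, and your argument simply unpacks these ingredients in detail (spelling out the contraposition for monotonicity, the interval-range criterion for continuity, and the explicit $M$--$\delta$ verification of the limit in \textit{(iii)}). There is no substantive difference in strategy.
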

\begin{proof} These properties follow directly from Definition~\ref{def:class_K1}, and the fact that the inverse of a strictly increasing function is continuous and strictly increasing.
\end{proof}

The next lemma states some useful properties of class $\mathcal{K}_\infty$ and class $\mathcal{K}^1$ functions, which will be used in the next section.
\begin{lemma}\label{lem:comparison} Let $\alpha\in\mathcal{K}_\infty$ (see~\cite[Definition 1]{Kellett2014}) and $\kappa_1,\kappa_2\in\mathcal{K}^1$. Then, $\kappa_1\circ\alpha\in\mathcal{K}^1$, and $\kappa_1^{-1}\circ\kappa_2\in\mathcal{K}_\infty$.
\end{lemma}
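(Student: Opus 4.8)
The plan is to verify, for each of the two claimed memberships, the defining properties of the respective function class directly, using only elementary facts about compositions of continuous and of strictly increasing functions, together with the limit behaviour of $\kappa$ and $\kappa^{-1}$ recorded in the two preceding propositions.

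For $\kappa_1\circ\alpha\in\mathcal{K}^1$ I would proceed as follows. Since $\alpha$ maps $\mathbb{R}_{\geq 0}$ into $\mathbb{R}_{\geq 0}$ and $\kappa_1$ maps $\mathbb{R}_{\geq 0}$ into $[0,1)$, the composition is a well-defined map $\mathbb{R}_{\geq 0}\to[0,1)$; it is continuous as a composition of continuous functions and strictly increasing as a composition of strictly increasing functions. Next, $(\kappa_1\circ\alpha)(0)=\kappa_1(\alpha(0))=\kappa_1(0)=0$. Finally, because $\alpha(r)\to\infty$ as $r\to\infty$ while $\kappa_1(s)\to 1$ as $s\to\infty$, composing the limits yields $(\kappa_1\circ\alpha)(r)\to 1$ as $r\to\infty$, which exhausts Definition~\ref{def:class_K1}.

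For $\kappa_1^{-1}\circ\kappa_2\in\mathcal{K}_\infty$ I would first invoke Proposition~\ref{prop:K1_biject} to guarantee that $\kappa_1$ is bijective, so that $\kappa_1^{-1}\colon[0,1)\to\mathbb{R}_{\geq 0}$ is defined; the subsequent proposition then provides that $\kappa_1^{-1}$ is continuous and strictly increasing, with $\kappa_1^{-1}(0)=0$ and $\kappa_1^{-1}(s)\to\infty$ as $s\to 1^-$. Since $\kappa_2$ maps $\mathbb{R}_{\geq 0}$ into $[0,1)$, the composition $\kappa_1^{-1}\circ\kappa_2\colon\mathbb{R}_{\geq 0}\to\mathbb{R}_{\geq 0}$ is well-defined, continuous, and strictly increasing; moreover $(\kappa_1^{-1}\circ\kappa_2)(0)=\kappa_1^{-1}(\kappa_2(0))=\kappa_1^{-1}(0)=0$, and because $\kappa_2(r)\to 1^-$ as $r\to\infty$ we obtain $(\kappa_1^{-1}\circ\kappa_2)(r)\to\infty$ as $r\to\infty$, matching the definition of class $\mathcal{K}_\infty$.

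Since the argument is a chain of one-line verifications, no substantive obstacle is expected; the only point needing slight care is the last limit, where one must combine part~\textit{(iii)} of the proposition (the blow-up of $\kappa_1^{-1}$ at $1^-$) with the fact that $\kappa_2$ approaches $1$ strictly from below, so that the argument fed to $\kappa_1^{-1}$ remains inside its domain $[0,1)$ while tending to the endpoint.
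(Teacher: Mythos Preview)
Your proposal is correct and follows essentially the same approach as the paper's proof: both verify directly the defining properties of the target classes by noting that compositions of continuous strictly increasing functions are continuous and strictly increasing, evaluating at the origin, and then computing the limits at infinity using the known limit behaviour of $\alpha$, $\kappa_2$, and $\kappa_1^{-1}$. Your write-up is slightly more explicit about well-definedness and the care needed for the final limit, but there is no substantive difference.
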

\begin{proof} The composition of increasing functions is increasing. Moreover, note that $(\kappa_1\circ\alpha)(0)=\kappa_1(\alpha(0))=\kappa_1(0)=0$ and  $(\kappa_1^{-1}\circ\kappa_2)(0)=\kappa_1^{-1}(\kappa_2(0))=\kappa_1^{-1}(0)=0$. Finally, since $\kappa_1$ is an homeomorphism, \[\lim_{r\to\infty}(\kappa_1\circ\alpha)(r)=\kappa_1\left(\lim_{r\to\infty}\alpha(r)\right)=1,\] and \[\lim_{r\to\infty}(\kappa_1^{-1}\circ\kappa_2)(r)=\kappa_1^{-1}\left(\lim_{r\to\infty}\kappa_2(r)\right)=\infty.\]
\end{proof}

\begin{example}\label{exm:k_examples} Some examples of $\mathcal{K}^1$ functions are:
\begin{itemize}
\item [\textit{(i)}] $\kappa(r)=1-\exp(-r)$;
\item [\textit{(ii)}] $\kappa(r)=\frac{2}{\pi}\arctan(r)$;
\item [\textit{(iii)}] $\kappa(r)=\frac{r}{r+\alpha}$, with $\alpha>0$;
\item [\textit{(iv)}] $\kappa(r)=P(\alpha,r)$, with $\alpha>0$ (see Subsection~\ref{subsec:notation}).
\item [\textit{(v)}] $\kappa(r)=I(\alpha,\beta,\frac{r}{r+1})$, with $\alpha,\beta>0$ (see Subsection~\ref{subsec:notation}).
\end{itemize}
\end{example}

\section{A Lyapunov characterization of predefined-time stability}\label{sec:main}

This section states a  Lyapunov-like theorem for predefined-time stability.  The importance of this theorem is twofold. On the one hand, it establishes equivalence between previous Lyapunov theorems for predefined-time stability, constituting a unifying result. On the other hand, it allows designing predefined-time stable controllers, as shown in Section~\ref{sec:control}. Moreover, this Lyapunov theorem is extended for analyzing the property of predefined-time ultimate boundedness with predefined bound. Consequently, the results presented in this section constitute the main contribution of this note.

\begin{theorem}\label{thm:predefined} Let $\kappa\in\mathcal{K}^1$ be differentiable in $\mathbb{R}\setminus \{0\}$, and $V:\mathbb{R}^n\to\mathbb{R}_{\geq 0}$ be a continuous, positive definite and radially unbounded function. 
If for any $T_c\in\mathbb{R}_+$, there exists some $\bm{\rho}\in\mathbb{R}^l$, such that the time-derivative of $V$ along the trajectories of~\eqref{eq:sys} satisfies
\begin{equation}\label{eq:Vineq}
\dot{V}(\bm{x})\leq -\frac{1}{(1-p)T_c}\frac{\kappa(V(\bm{x}))^p}{\kappa'(V(\bm{x}))}, \quad \text{for } \bm{x}\in\mathbb{R}^n\setminus\left\lbrace\bm{0}\right\rbrace,
\end{equation}
for some $0\leq p<1$, then the origin of~\eqref{eq:sys} is predefined-time stable. Moreover, if~\eqref{eq:Vineq} is an equality, then $\sup_{\bm{x}_0\in\mathbb{R}^n}T(\bm{x}_0)=T_c$.
\end{theorem}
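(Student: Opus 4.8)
The plan is to reduce the differential inequality \eqref{eq:Vineq} to a scalar comparison problem and then explicitly integrate the worst-case trajectory. First I would fix $T_c\in\mathbb{R}_+$ and the corresponding $\bm{\rho}$ for which \eqref{eq:Vineq} holds, and let $\bm{x}_0\in\mathbb{R}^n\setminus\{\bm{0}\}$ be arbitrary. Writing $v(t)=V(\bm{\Phi}(t,\bm{x}_0))$, the chain rule together with \eqref{eq:Vineq} gives $\dot v \leq -\frac{1}{(1-p)T_c}\,\kappa(v)^p/\kappa'(v)$ as long as $v>0$. The key observation is that the substitution $w=\kappa(v)$ linearizes the right-hand side: since $\kappa$ is a differentiable class $\mathcal{K}^1$ function on $\mathbb{R}\setminus\{0\}$ with $\kappa'>0$, we have $\dot w = \kappa'(v)\dot v \leq -\frac{1}{(1-p)T_c}\,w^p$, and $w$ ranges in $[0,1)$ with $w(0)=\kappa(v(0))<1$. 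I would then invoke a standard comparison lemma (Bhat--Bernstein / Khalil style) to bound $w(t)$ by the solution of $\dot W = -\frac{1}{(1-p)T_c}W^p$, $W(0)=\kappa(V(\bm{x}_0))$, which is solved in closed form: $W(t) = \bigl(W(0)^{1-p} - \frac{t}{T_c}\bigr)^{1/(1-p)}$ for $t$ until this quantity hits $0$, i.e. at time $t^\star = T_c\,W(0)^{1-p} = T_c\,\kappa(V(\bm{x}_0))^{1-p} < T_c$, since $\kappa(V(\bm{x}_0))<1$ and $1-p>0$.

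From $0\le w(t)\le W(t)$ it follows that $\kappa(v(t))=0$ for all $t\ge t^\star$, and since $\kappa^{-1}(0)=0$ and $\kappa$ is injective, $v(t)=0$, i.e. $V(\bm{\Phi}(t,\bm{x}_0))=0$; as $V$ is positive definite this forces $\bm{\Phi}(t,\bm{x}_0)=\bm{0}$ for all $t\ge t^\star$. Hence $T(\bm{x}_0)\le t^\star = T_c\,\kappa(V(\bm{x}_0))^{1-p}$. Taking the supremum over $\bm{x}_0$ and using $\kappa(r)\to 1$ as $r\to\infty$ (so $\kappa(V(\bm{x}_0))^{1-p}<1$ uniformly but approaches $1$), we get $\sup_{\bm{x}_0\in\mathbb{R}^n}T(\bm{x}_0)\le T_c<\infty$. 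Since $V$ is a Lyapunov function ($\dot V<0$ off the origin, $V$ positive definite and radially unbounded), Lyapunov stability is standard, so the origin is finite-time stable with bounded settling time, i.e. fixed-time stable; and because this construction works for \emph{every} $T_c\in\mathbb{R}_+$ by hypothesis, the origin is predefined-time stable in the sense of Definition~\ref{def:predefined}.

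For the equality case, the same comparison argument run with equalities shows $v(t)$ equals the explicit solution, so $T(\bm{x}_0) = T_c\,\kappa(V(\bm{x}_0))^{1-p}$ exactly; taking the supremum over $\bm{x}_0\in\mathbb{R}^n$ and using that $\kappa$ is strictly increasing with $\kappa(r)\to 1$ yields $\sup_{\bm{x}_0}T(\bm{x}_0) = T_c\cdot\sup_{\bm{x}_0}\kappa(V(\bm{x}_0))^{1-p} = T_c\cdot 1 = T_c$ (the supremum is not attained but equals $T_c$). The main technical obstacle is handling the comparison lemma carefully near $v=0$: the right-hand side $\kappa(v)^p/\kappa'(v)$ need not be Lipschitz (or even continuous) at the origin, and when $p>0$ uniqueness of the comparison ODE can fail at $W=0$, so one must argue via the \emph{maximal} solution of the comparison equation or, equivalently, integrate $\dot v$ directly on the interval where $v>0$ and show $v$ reaches $0$ in finite time without overshoot. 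A clean way around this is to separate variables: on any interval where $v(t)>0$, $\frac{d}{dt}\bigl[\kappa(v(t))^{1-p}\bigr] = (1-p)\kappa(v)^{-p}\kappa'(v)\dot v \le -\frac{1}{T_c}$, so $\kappa(v(t))^{1-p}\le \kappa(V(\bm{x}_0))^{1-p} - t/T_c$, which cannot stay positive past $t^\star$; this sidesteps any appeal to a comparison lemma entirely. The remaining routine verification is that the Filippov solution is absolutely continuous so that these manipulations are valid almost everywhere, which is guaranteed by the standing assumptions on \eqref{eq:sys}.
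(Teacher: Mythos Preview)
Your proposal is correct and follows essentially the same route as the paper: reduce \eqref{eq:Vineq} via the change of variable $w=\kappa(V)$ to the scalar inequality $\dot w\le -\tfrac{1}{(1-p)T_c}w^p$, integrate explicitly to obtain the settling-time bound $T(\bm{x}_0)\le T_c\,\kappa(V(\bm{x}_0))^{1-p}<T_c$, and take the supremum. Your discussion of the non-Lipschitz issue at the origin and the direct separation-of-variables alternative $\tfrac{d}{dt}\kappa(v)^{1-p}\le -1/T_c$ is in fact a bit more careful than the paper, which simply invokes the comparison lemma without comment.
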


\begin{proof} Let $T_c\in\mathbb{R}_+$. Then, there exists some $\bm{\rho}\in\mathbb{R}^l$ such that~\eqref{eq:Vineq} holds. Moreover, since $V:\mathbb{R}^n\to\mathbb{R}_{\geq 0}$ is a continuous, positive definite and radially unbounded function, and its time-derivative~\eqref{eq:Vineq} is negative for $\bm{x}\in\mathbb{R}^n\setminus\left\lbrace\bm{0}\right\rbrace$, the origin of system~\eqref{eq:sys} is asymptotically stable~\cite{Khalil2000}.


Now, let $\bm{\Phi}(t,\bm{x}_0)$ be a solution of~\eqref{eq:sys} and let $y:\mathbb{R}_{\geq 0}\to\mathbb{R}_{\geq 0}$ be a function that satisfies 
\[
\dot{y} = -\frac{1}{(1-p)T_c}\frac{\kappa(y)^p}{\kappa'(y)},
\] 
and $V(\bm{x}_0)\leq y(0)$. Hence,
\[
\kappa(y(t))=\left\lbrace\begin{array}{cl}
    \left[\kappa(y(0))^{1-p}-\frac{t}{T_c}\right]^{\frac{1}{1-p}} & \text{if } 0\leq t\leq T_c\kappa(y(0))^{1-p} \\
    0 & \text{if } t>T_c\kappa(y(0))^{1-p},
\end{array}\right.
\]
and $V(\bm{\Phi}(t,\bm{x}_0))\leq y(t)$ (it is an equality only if~\eqref{eq:Vineq}, is an equality) by the comparison lemma~\cite{Khalil2000}. Thus, $V(\bm{\Phi}(t,\bm{x}_0))=0$ for $t \geq T_c\kappa(V(\bm{x}_0))^{1-p}$, implying that the trajectories of~\eqref{eq:sys} reach the origin in finite time, and the settling-time function satisfies 
\[
\sup_{\bm{x}_0\in\mathbb{R}^n}T(\bm{x}_0) \leq \sup_{\bm{x}_0\in\mathbb{R}^n} T_c\kappa(V(\bm{x}_0))^{1-p} = T_c.
\]
Hence, by Definition~\ref{def:predefined}, the origin of system~\eqref{eq:sys} is in fact predefined-time stable. Moreover, if~\eqref{eq:Vineq} is an equality, then $\sup_{\bm{x}_0\in\mathbb{R}^n}T(\bm{x}_0)=\sup_{\bm{x}_0\in\mathbb{R}^n}T_c \kappa(V(\bm{x}_0))^{1-p}=T_c$.
\end{proof}

\begin{remark} Theorem~\ref{thm:predefined} can be equivalently formulated in terms of a function $W:\mathbb{R}^n\rightarrow\mathbb{R}_{\geq 0}$, which satisfies $W(\bm{x})=0$ if and only if $\bm{x}=\bm{0}$, $0 \leq W(\bm{x}) < 1$, and $\lim_{\norm{\bm{x}}\to\infty}W(\bm{x})=1$. This equivalence is given by the transformation $W(\bm{x})=\kappa(V(\bm{x}))$. Moreover, in this equivalent reformulation, inequality~\eqref{eq:Vineq} is replaced by inequality
\[
\dot{W}(\bm{x}) \leq -\frac{1}{(1-p)T_c}W(\bm{x})^p, \quad \text{for } \bm{x}\in\mathbb{R}^n\setminus\left\lbrace\bm{0}\right\rbrace.
\]
In this sense, \cite[Theorem 10]{Jimenez-Rodriguez2018a} is a corollary of Theorem~\ref{thm:predefined}, which is obtained fixing $p=0$.
\end{remark}

\begin{remark} Although the result in Theorem~\ref{thm:predefined} is independent of the choice of $\kappa\in\mathcal{K}^1$, the form of the differential inequality~\eqref{eq:Vineq} strongly depends on the particular selection of this function.
Indeed, previous Lyapunov-like theorems for predefined-time stability reported in the literature are, in fact, particular forms of Theorem~\ref{thm:predefined}. For instance:
\begin{itemize}
\item[\textit{(i)}] \cite[Theorem~III.1]{Sanchez-Torres2018a} is obtained from Theorem~\ref{thm:predefined} with the particular selections of $\kappa(r)=P\left(\frac{1-\beta q}{s},\alpha r^s\right)=\frac{\gamma\left(\frac{1-\beta q}{s},\alpha r^s\right)}{\Gamma\left(\frac{1-\beta q}{s}\right)}$, with $\alpha,\beta,s,q>0$ and $\beta q<1$, and $p=0$. Thus, 
inequality~\eqref{eq:Vineq} then becomes 
\begin{equation}\label{eq:V_gamma}
\dot{V}(\bm x)\leq -\frac{\alpha^{\frac{\beta q-1}{s}}\Gamma\left(\frac{1-\beta q}{s}\right)}{sT_c}\exp\left(\alpha V(\bm x)^s\right)V(\bm x)^{\beta q}    
\end{equation}
for $\bm{x}\in\mathbb{R}^n\setminus\left\lbrace\bm{0}\right\rbrace$.
\item[\textit{(ii)}] At the same time, with $\alpha=\beta=1$, $q=1-s$ and $0<s\leq1$, inequality \eqref{eq:V_gamma} reduces to
\[\dot{V}(\bm x)\leq -\frac{1}{sT_c}\exp\left(V(\bm x)^s\right)V(\bm x)^{1-s}, \quad \text{for } \bm{x}\in\mathbb{R}^n\setminus\left\lbrace\bm{0}\right\rbrace,\]
which is equivalent to Theorem~\ref{thm:predefined} with the particular selection of $\kappa(r)=1-\exp(-r^s)$. In this form, the result presented in
\cite[Theorem~2.1]{Sanchez-Torres2018} is recovered.
\item[\textit{(iii)}] \cite[Theorem~3]{Aldana-Lopez2019} is retrieved from Theorem~\ref{thm:predefined} with the particular selections of $\kappa(r)=I\left(m_s,m_q,\frac{\beta r^{q-s}}{\beta r^{q-s}+\alpha}\right),$ with $m_s=\frac{1-ks}{q-s}>0$, $m_q=\frac{kq-1}{q-s}>0$, $\alpha,\beta,k>0$ and $0<ks<1<kq$, and $p=0$. Replacing the above picks into
inequality~\eqref{eq:Vineq}, it yields \[\dot{V}(\bm x)\leq -\frac{\zeta}{T_c}\left(\alpha V(\bm x)^{s}+\beta V(\bm x)^{q}\right)^{k}, \quad \text{for } \bm{x}\in\mathbb{R}^n\setminus\left\lbrace\bm{0}\right\rbrace,\] where $\zeta=\frac{\Gamma(m_s)\Gamma(m_q)}{\alpha^k\Gamma(k)(q-s)}\left(\frac{\alpha}{\beta}\right)^{m_s}$.
\end{itemize}
Thus, in this work, it is shown for the first time that all previous Lyapunov-like theorems for predefined-time stability of autonomous systems are equivalent.
\end{remark}

Lyapunov analysis can also be extended to show predefined-time ultimate boundedness with predefined bound of the solutions of~\eqref{eq:sys} (see Definition~\ref{def:ptub}), even if there is no equilibrium point at the origin. Sufficient conditions are stated in the following theorem:

\begin{theorem}\label{thm:predefined_ub} Let $\kappa\in\mathcal{K}^1$ be differentiable in $\mathbb{R}\setminus \{0\}$, and $V:\mathbb{R}^n\to\mathbb{R}_{\geq 0}$ be a continuous, positive definite and radially unbounded function.
If for any $T_c,\mu\in\mathbb{R}_+$, there exists some $\bm{\rho}\in\mathbb{R}^l$, such that the time-derivative of $V$ along the trajectories of~\eqref{eq:sys} satisfies
\begin{equation}\label{eq:Vineq_ub}
\dot{V}(\bm{x})\leq -\frac{1}{(1-p)T_c}\frac{\kappa(V(\bm{x}))^p}{\kappa'(V(\bm{x}))}, \quad \text{for } \norm{\bm{x}}\geq\mu,
\end{equation} 
then, for any $\bm{x}_0\in\mathbb{R}^n$ the solution $\bm{\Phi}(t,\bm{x}_0)$ of~\eqref{eq:sys} satisfies \[\norm{\bm{\Phi}(t,\bm{x}_0)}\leq b=\alpha_1^{-1}(\alpha_2(\mu)), \quad \text{for all }t\geq T_c,\]
where $\alpha_1,\alpha_2\in\mathcal{K}_{\infty}$. 

Moreover, if $V(\bm{x})=\alpha(\norm{\bm{x}})$, with $\alpha\in\mathcal{K}_\infty$, then $b=\mu$ in the above inequality. This is, the solutions of~\eqref{eq:sys} are predefined-time ultimately bounded with predefined bound.
\end{theorem}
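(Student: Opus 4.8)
The plan is to reduce the ultimate boundedness claim to the already-established Theorem~\ref{thm:predefined} by a comparison argument restricted to the region $\norm{\bm{x}}\geq\mu$, and then translate the resulting bound on $V$ into a bound on $\norm{\bm{x}}$ via the standard class-$\mathcal{K}_\infty$ sandwiching. First I would invoke the fact that $V$ is continuous, positive definite and radially unbounded to obtain $\alpha_1,\alpha_2\in\mathcal{K}_\infty$ with $\alpha_1(\norm{\bm{x}})\leq V(\bm{x})\leq\alpha_2(\norm{\bm{x}})$ for all $\bm{x}\in\mathbb{R}^n$ (this is a classical consequence, e.g. \cite[Lemma~4.3]{Khalil2000}). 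Define $c=\alpha_2(\mu)$, so that $V(\bm{x})\leq c$ implies $\alpha_1(\norm{\bm{x}})\leq c$, i.e. $\norm{\bm{x}}\leq\alpha_1^{-1}(c)=b$; conversely, $\norm{\bm{x}}\geq\mu$ whenever $V(\bm{x})>c$ would be too strong, so the key is the slightly weaker observation that $V(\bm{x})< \alpha_1(\mu)$ forces $\norm{\bm{x}}<\mu$, while the sublevel set $\{V\leq c\}$ is a positively invariant neighborhood of the desired ball --- I would make this precise below.

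The core step is the comparison argument. Fix $T_c,\mu\in\mathbb{R}_+$ and pick $\bm{\rho}$ so that \eqref{eq:Vineq_ub} holds. Let $\bm{x}_0\in\mathbb{R}^n$ and consider the scalar comparison system $\dot{y}=-\frac{1}{(1-p)T_c}\frac{\kappa(y)^p}{\kappa'(y)}$ with $y(0)=V(\bm{x}_0)$; as computed in the proof of Theorem~\ref{thm:predefined}, $\kappa(y(t))=[\kappa(y(0))^{1-p}-t/T_c]^{1/(1-p)}$ until it hits zero at time $T_c\kappa(y(0))^{1-p}\leq T_c$, after which $y(t)=0$. As long as the trajectory satisfies $\norm{\bm{\Phi}(t,\bm{x}_0)}\geq\mu$, inequality~\eqref{eq:Vineq_ub} is in force and the comparison lemma gives $V(\bm{\Phi}(t,\bm{x}_0))\leq y(t)$. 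I would argue that $V(\bm{\Phi}(t,\bm{x}_0))\leq\max\{V(\bm{x}_0),c\}$ for all $t\geq 0$ (while $V>c$ the state is outside the closed $\mu$-ball because $V(\bm{x})\leq c$ whenever $\norm{\bm{x}}\leq\mu$ by $V(\bm{x})\leq\alpha_2(\norm{\bm{x}})\leq\alpha_2(\mu)=c$... wait, that is the wrong direction) --- more carefully: while $V(\bm{\Phi}(t,\bm{x}_0))>c=\alpha_2(\mu)$ we cannot immediately conclude $\norm{\bm{\Phi}}\geq\mu$, so instead I set $c=\alpha_1(\mu)$? No: the cleanest route is to take $R\geq\alpha_2(\mu)$, note $\{\norm{\bm{x}}\leq\mu\}\subseteq\{V\leq\alpha_2(\mu)\}$ and its complement $\{V>\alpha_2(\mu)\}\subseteq\{\norm{\bm{x}}>\mu\}$ does NOT hold either. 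The correct standard fact is: if $V(\bm{x})\ge\alpha_1(\mu)$ then... hmm. I will use the genuinely correct statement: on the set $\{\bm{x}:V(\bm{x})\geq\alpha_2(\mu)\}$ one has $\alpha_2(\norm{\bm{x}})\geq V(\bm{x})\geq\alpha_2(\mu)$, hence $\norm{\bm{x}}\geq\mu$, so \eqref{eq:Vineq_ub} applies there. Therefore the sublevel set $\Omega_c=\{V\leq\alpha_2(\mu)\}$ is positively invariant (on its boundary $\dot V\leq 0$), and for $\bm{x}_0\notin\Omega_c$ the comparison estimate drives $V$ strictly down at the predefined rate until entry into $\Omega_c$, which occurs no later than $t=T_c\kappa(V(\bm{x}_0))^{1-p}$... but that can exceed $T_c$ since $V(\bm{x}_0)$ may be large --- however $\kappa(V(\bm{x}_0))<1$ always, so $T_c\kappa(V(\bm{x}_0))^{1-p}<T_c$, giving entry before $T_c$. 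Once inside $\Omega_c$ the state stays inside, so $V(\bm{\Phi}(t,\bm{x}_0))\leq\alpha_2(\mu)$ for all $t\geq T_c$, whence $\alpha_1(\norm{\bm{\Phi}(t,\bm{x}_0)})\leq\alpha_2(\mu)$ and $\norm{\bm{\Phi}(t,\bm{x}_0)}\leq\alpha_1^{-1}(\alpha_2(\mu))=b$ for all $t\geq T_c$.

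For the final refinement, when $V(\bm{x})=\alpha(\norm{\bm{x}})$ with $\alpha\in\mathcal{K}_\infty$ I would take $\alpha_1=\alpha_2=\alpha$, so that $b=\alpha^{-1}(\alpha(\mu))=\mu$; the region $\norm{\bm{x}}\geq\mu$ coincides exactly with $\{V\geq\alpha(\mu)\}$, the sublevel set $\Omega=\{V\leq\alpha(\mu)\}$ is exactly the closed ball $\overline{B_\mu(\bm{0})}$, and the argument above yields $\norm{\bm{\Phi}(t,\bm{x}_0)}\leq\mu$ for all $t\geq T_c$ and every $\bm{x}_0$, which is precisely Definition~\ref{def:ptub}. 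The main obstacle I anticipate is bookkeeping the two thresholds correctly: one must keep \eqref{eq:Vineq_ub} valid only where it is assumed (outside the $\mu$-ball, equivalently on $\{V\geq\alpha_2(\mu)\}$), invoke the comparison lemma only on the time interval before the trajectory enters $\Omega_c$, and verify positive invariance of $\Omega_c$ from $\dot V\leq 0$ on its boundary --- the predefined-time character is then inherited for free from the uniform bound $\kappa(V(\bm{x}_0))<1$ exactly as in Theorem~\ref{thm:predefined}.
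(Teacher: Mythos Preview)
Your proposal, once the mid-paragraph self-corrections settle, is correct and follows essentially the same route as the paper: sandwich $V$ between $\alpha_1(\norm{\bm{x}})$ and $\alpha_2(\norm{\bm{x}})$ via \cite[Lemma~4.3]{Khalil2000}, observe that $\{V\geq\alpha_2(\mu)\}\subseteq\{\norm{\bm{x}}\geq\mu\}$ so that \eqref{eq:Vineq_ub} holds on that set, use positive invariance of $\{V\leq\alpha_2(\mu)\}$ together with the comparison estimate from Theorem~\ref{thm:predefined} to force entry before time $T_c$ (since $\kappa(V(\bm{x}_0))^{1-p}<1$), and then read off the bound $\norm{\bm{x}}\leq\alpha_1^{-1}(\alpha_2(\mu))$. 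The only difference is presentational: the paper writes the argument linearly, whereas you arrive at the right threshold $c=\alpha_2(\mu)$ only after discarding two false starts---clean that up and you have exactly the paper's proof.
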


\begin{proof} Let $T_c,\mu\in\mathbb{R}_+$. Then, there exists $\bm{\rho}\in\mathbb{R}^l$ such that~\eqref{eq:Vineq_ub} holds. Since $V$ is a continuous, positive definite and radially unbounded function, there exist $\alpha_1,\alpha_2\in\mathcal{K}_\infty$ such that $\alpha_1(\norm{\bm{x}}) \leq V(\bm{x}) \leq \alpha_2(\norm{\bm{x}})$~\cite[Lemma 4.3]{Khalil2000}.

Note that $\norm{\bm{x}}<\mu \Longleftrightarrow \alpha_2(\norm{\bm{x}})<\alpha_2(\mu) \Rightarrow V(\bm{x})<\alpha_2(\mu)$, i.e. the set $\left\{\bm{x}\in\mathbb{R}^n:\norm{\bm{x}}<\mu\right\}\subseteq\left\{\bm{x}\in\mathbb{R}^n:V(\bm{x})< \alpha_2(\mu)\right\}$, or equivalently $\left\{\bm{x}\in\mathbb{R}^n:V(\bm{x})\geq \alpha_2(\mu)\right\}\subseteq\left\{\bm{x}\in\mathbb{R}^n:\norm{\bm{x}}\geq\mu\right\}$. Hence, inequality~\eqref{eq:Vineq_ub} holds for $V(\bm{x})\geq \alpha_2(\mu)$.

The above implies that the set $\left\{\bm{x}\in\mathbb{R}^n:V(\bm{x})\leq \alpha_2(\mu)\right\}$ is positively invariant, since the derivative $\dot{V}(\bm{x})$ is negative in its boundary $\left\{\bm{x}\in\mathbb{R}^n:V(\bm{x})= \alpha_2(\mu)\right\}$.

Now, we show that all trajectories starting in the set $\left\{\bm{x}\in\mathbb{R}^n:V(\bm{x})\geq \alpha_2(\mu)\right\}$, must enter the set $\left\{\bm{x}\in\mathbb{R}^n:V(\bm{x})\leq \alpha_2(\mu)\right\}$ within at most $T_c$ time units. Let $\bm{\Phi}(t,\bm{x}_0)$ be a solution of~\eqref{eq:sys}, with $\bm{x}_0\in\left\{\bm{x}\in\mathbb{R}^n:V(\bm{x})\geq \alpha_2(\mu)\right\}$, i.e. $V(\bm{x}_0)\geq\alpha_2(\mu)$. From~\eqref{eq:Vineq_ub} and following similar arguments as in Theorem~\ref{thm:predefined}, $\kappa(V(\bm{\Phi}(t,\bm{x}_0)))\leq \left[\kappa(V(\bm{x}_0))^{1-p}-\frac{t}{T_c}\right]^{\frac{1}{1-p}}$, for $t\in\left[0,T_c(\kappa(V(\bm{x}_0))^{1-p}-\kappa(\alpha_2(\mu))^{1-p})\right]$. Hence, $\kappa(V(\bm{\Phi}(t,\bm{x}_0)))\leq\kappa(\alpha_2(\mu)) \Longleftrightarrow V(\bm{\Phi}(t,\bm{x}_0))\leq\alpha_2(\mu)$ for all $t\geq T_c(\kappa(V(\bm{x}_0))^{1-p}-\kappa_2(\mu)^{1-p})$, and consequently for all $t\geq T_c$.

Furthermore, note that $V(\bm{x})\leq\alpha_2(\mu) \Rightarrow \alpha_1(\norm{\bm{x}})\leq\alpha_2(\mu) \Longleftrightarrow  \norm{\bm{x}}<\alpha_1^{-1}(\alpha_2(\mu))$. Hence, $\norm{\bm{\Phi}(t,\bm{x}_0)}\leq\alpha_1^{-1}(\alpha_2(\mu))$ for all $t\geq T_c$. 

Moreover, if $V(\bm{x})=\alpha(\norm{\bm{x}})$, one can select $\alpha_1=\alpha_2=\alpha$, and the result is obtained.
\end{proof}

\section{Application: Lyapunov-based predefined-time controller design} \label{sec:control}

\subsection{Problem statement}

Consider the following affine control system: 
\begin{equation}\label{eq:cont_sys}
\dot{\bm{x}} = \bm{f}(\bm{x}) + \bm{B}(\bm{x})\bm{v} + \bm{\delta}(\bm{x},t)
\end{equation}
where $\bm{x}:\mathbb{R}_{\geq 0}\to\mathbb{R}^n$ is the system state, $\bm{v}\in\mathbb{R}^m$ is the control input,
$\bm{\delta}:\mathbb{R}^n\times\mathbb{R}_{\geq 0}\rightarrow\mathbb{R}^n$ is a disturbance vector that includes plant parameter variations and external unknown perturbations, and $\bm{B}:\mathbb{R}^n\to\mathbb{R}^{n\times n}$ is continuous and such that $\text{rank }\bm{B}(\bm{x})=m$ for all $\bm{x}\in\mathbb{R}^n$.

The objective is to design a feedback control input $\bm{v}$ such that the trajectories of~\eqref{eq:cont_sys} reach (a vicinity of) the manifold
\begin{equation}\label{eq:constraint}
\bm{s}(\bm{x},t)=0,
\end{equation}
where $\bm{s}:\mathbb{R}^n\times\mathbb{R}_{\geq 0}\rightarrow\mathbb{R}^m$ is a smooth mapping, in an arbitrarily selected time $T_c\in\mathbb{R}_+$ and remain there for all $t\geq T_c$.

\begin{remark} There are several important control problems which take the form of system~\eqref{eq:cont_sys} subject to~\eqref{eq:constraint}. For instance,
\begin{itemize}
\item[\textit{(i)}] An output tracking problem, where $\bm{s}(\bm{x},t)$ represents the output tracking error vector; the relative degree of each output component with respect to the control input is equal to one, and the system~\eqref{eq:cont_sys} with respect to \eqref{eq:constraint} is minimum phase~\cite{Ley-Rosas2016}. The control objective is to ensure the tracking error be predefined time stable.
\item[\textit{(ii)}] A sliding mode (SM) control design problem, where $\bm{s}(\bm{x},t)=0$~\eqref{eq:constraint} represents a sliding manofold with a desired SM motion~\cite{Utkin1992}. In this case, the objective is to induce the SM on the designed manifold in predefined-time (predefined-time reaching phase).
\item[\textit{(iii)}] An optimization problem solved by dynamic networks, where $\bm{s}(\bm{x},t)$ is a variable which expresses the error in the satisfaction of equality constraints~\cite{Utkin1992}.
\end{itemize}
\end{remark}

The time derivative of $\bm{s}(\bm{x},t)$ is
\begin{equation}\label{ueq1}
\dot{\bm{s}} = \bm{G}(\bm{x},t)\bm{f}(\bm{x})+ \bm{G}(\bm{x},t)\bm{B}(\bm{x})\bm{v} +
\bm{G}(\bm{x},t)\bm{\delta}(\bm{x},t) +\frac{\partial \bm{s}(\bm{x},t)}{\partial t}.
\end{equation}
Then, assuming that $\text{rank} [\bm{G}(\bm{x},t)\bm{B}(\bm{x})]=m$, for all $\bm{x}\in\mathbb{R}^n$ and $t\geq 0$, the control $\bm{v}$ is chosen as
\begin{equation}\label{ueq2}
\bm{v} = -[\bm{G}(\bm{x},t)\bm{B}(\bm{x})]^{-1}\bigg[\bm{G}(\bm{x},t)\bm{f}(\bm{x}) +\frac{\partial \bm{s}(\bm{x},t)}{\partial t}+\bm{u}\bigg],
\end{equation}
where $\bm{u}\in\mathbb{R}^m$ is a virtual control input and $\bm{G}(\bm{x},t)=\frac{\partial \bm{s}(\bm{x},t)}{\partial \bm{x}}$.

Substituting~\eqref{ueq2} in~\eqref{ueq1} results in
\begin{equation}\label{eq:reduced_sys}
\dot{\bm{s}}=\bm{u}+\bm{\Delta}(\bm{x},t), \ \ \bm{s}(\bm{x}_0,0)=\bm{s}_0,
\end{equation}
where $\bm{\Delta}(\bm{x},t)=\bm{G}(\bm{x},t)\bm{\delta}(\bm{x},t)$, which is assumed to be globally bounded by $\sup_{(\bm{x},t)\mathbb{R}^n\in\times\mathbb{R}_{\geq0}}\norm{\bm{\Delta}(\bm{x},t)}\leq\delta$ with $0\leq\delta<\infty$ a known constant.

From a control design point of view, the perturbation term $\bm{\Delta}(\bm{x},t)$ can only be completely rejected by a discontinuous control term (like the unit-vector controller $\frac{\bm{s}}{\norm{\bm{s}}}$), given that it is only restricted to be bounded (no conditions of smoothness, Lipschitz continuity neither continuity are assumed). However, such a discontinuous control term might deteriorate the components of a real physical system due to high-frequency oscillations, or might even be impossible to implement due to limited actuator response.

A solution would be to sacrifice the exact convergence to the manifold $\bm{s}(\bm{x},t)=0$~\eqref{eq:constraint} in order to obtain a continuous controller (like the continuous approximation of the unit-vector controller $\frac{\bm{s}}{\norm{\bm{s}}+\epsilon}$, with $\epsilon>0$). In this case, as pointed out in Section~\ref{sec:prelim}, it can be ensured that the trajectories converge to a vicinity of $\bm{s}(\bm{x},t)=0$ (see Eq. \eqref{eq:constraint}).

Based on the above, the objective is to design the virtual control input $\bm u$ as a feedback control law $\bm u=\bm u(\bm s)$ that:
\begin{itemize}
\item enforces predefined-time stability to the origin of~\eqref{eq:reduced_sys}, obtaining a discontinuous controller; or
\item enforces the solutions of~\eqref{eq:reduced_sys} to be predefined-time ultimately bounded with predefined bound, obtaining a continuous controller.
\end{itemize}

\subsection{Proposed solution}
The proposed solution is a corollary of Theorems~\ref{thm:predefined} and \ref{thm:predefined_ub}.

\begin{corollary} \label{cor:cont_cont} Consider system~\eqref{eq:reduced_sys}. Selecting $\bm{u}$ as
\begin{equation}\label{eq:cont_cont}
\bm{u}=-\frac{1}{(1-\rho_2)\rho_1}\frac{\kappa(\norm{\bm{s}})^{\rho_2}}{\kappa'(\norm{\bm{s}})}\barpowvec{\bm{s}}^{0}-\rho_3\frac{\bm{s}}{\norm{\bm{s}}+\rho_4},
\end{equation}
where $\kappa\in\mathcal{K}^1$ is such that $\kappa':\mathbb{R}_{\geq 0}\to\Bar{\mathbb{R}}_{+}$, $\rho_1>0$, $0\leq\rho_2<1$, $\rho_3>\delta$ and $\rho_4\geq0$, the trajectories of the closed-loop system~\eqref{eq:reduced_sys}-\eqref{eq:cont_cont} are predefined-time ultimately bounded with predefined bound. In fact, for any $T_c,b\in\mathbb{R}_+$ and $\bm{s}_0\in\mathbb{R}^m$, the solution $\bm{\Phi}(t,\bm{s}_0)$ of~\eqref{eq:reduced_sys}-\eqref{eq:cont_cont} satisfies $\norm{\bm{\Phi}(t,\bm{s}_0)}\leq b=\frac{\delta\rho_4}{\rho_3-\delta}$ for all $t \geq T_c=\rho_1$.

Moreover, if  $\rho_4=0$, then the origin $\bm{s}=\bm{0}$ of the closed-loop system~\eqref{eq:reduced_sys}-\eqref{eq:cont_cont} is predefined-time stable.
\end{corollary}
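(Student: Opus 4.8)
The plan is to invoke Theorem~\ref{thm:predefined_ub} (and, for the last claim, Theorem~\ref{thm:predefined}) with the obvious Lyapunov candidate $V(\bm{s})=\norm{\bm{s}}$. This $V$ is continuous, positive definite, radially unbounded, differentiable on $\mathbb{R}^m\setminus\{\bm{0}\}$, and is of the special form $V(\bm{s})=\alpha(\norm{\bm{s}})$ with $\alpha=\mathrm{id}\in\mathcal{K}_\infty$, so the sharpened conclusion ``$b=\mu$'' of Theorem~\ref{thm:predefined_ub} will be available. So the whole proof reduces to verifying the differential inequality~\eqref{eq:Vineq_ub} for the closed loop with the right identification of $p$, $T_c$ and $\mu$.

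First I would compute $\dot V$ along the trajectories of~\eqref{eq:reduced_sys}--\eqref{eq:cont_cont} at points with $\bm{s}\neq\bm{0}$, where everything is classical. Using $\barpowvec{\bm{s}}^{0}=\bm{s}/\norm{\bm{s}}$ and $\dot V=\tfrac{\bm{s}^{T}}{\norm{\bm{s}}}\dot{\bm{s}}=\tfrac{\bm{s}^{T}}{\norm{\bm{s}}}\bigl(\bm{u}+\bm{\Delta}\bigr)$, substituting~\eqref{eq:cont_cont}, and exploiting $\tfrac{\bm{s}^{T}\bm{s}}{\norm{\bm{s}}^{2}}=1$ together with the Cauchy--Schwarz bound $\tfrac{\bm{s}^{T}}{\norm{\bm{s}}}\bm{\Delta}\leq\norm{\bm{\Delta}}\leq\delta$, I get
\begin{equation*}
\dot V(\bm{s})\leq -\frac{1}{(1-\rho_2)\rho_1}\frac{\kappa(V(\bm{s}))^{\rho_2}}{\kappa'(V(\bm{s}))}-\rho_3\frac{V(\bm{s})}{V(\bm{s})+\rho_4}+\delta .
\end{equation*}
The next step is to absorb the perturbation: since $\rho_3>\delta$, clearing denominators shows that $-\rho_3\tfrac{V}{V+\rho_4}+\delta\leq 0$ is equivalent to $V\geq \tfrac{\delta\rho_4}{\rho_3-\delta}=:\mu$. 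Hence, with $p=\rho_2\in[0,1)$ and $T_c=\rho_1$, the hypothesis~\eqref{eq:Vineq_ub} of Theorem~\ref{thm:predefined_ub} holds for all $\bm{s}$ with $\norm{\bm{s}}\geq\mu$, and because $V(\bm{s})=\alpha(\norm{\bm{s}})$ that theorem yields $\norm{\bm{\Phi}(t,\bm{s}_0)}\leq b=\mu=\tfrac{\delta\rho_4}{\rho_3-\delta}$ for all $t\geq T_c=\rho_1$ and all $\bm{s}_0$.

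To finish the first assertion in the sense of Definition~\ref{def:ptub}, given arbitrary $T_c,b\in\mathbb{R}_+$ I would choose $\rho_1=T_c$, any $\rho_2\in[0,1)$, any $\rho_3>\delta$, and $\rho_4$ with $\tfrac{\delta\rho_4}{\rho_3-\delta}\leq b$ (e.g. $\rho_4=\tfrac{b(\rho_3-\delta)}{\delta}$ when $\delta>0$, and any $\rho_4\geq0$ when $\delta=0$, in which case $b=0$). For the last assertion, taking $\rho_4=0$ makes $\mu=0$, so the displayed inequality holds for \emph{every} $\bm{s}\in\mathbb{R}^m\setminus\{\bm{0}\}$; this is exactly condition~\eqref{eq:Vineq} of Theorem~\ref{thm:predefined} with $p=\rho_2$, $T_c=\rho_1$, so the origin $\bm{s}=\bm{0}$ of~\eqref{eq:reduced_sys}--\eqref{eq:cont_cont} is predefined-time stable.

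The calculations above are elementary; the only point needing care is the non-smoothness of $V=\norm{\bm{s}}$ at $\bm{0}$ and the fact that when $\rho_4=0$ the feedback~\eqref{eq:cont_cont} is discontinuous at $\bm{0}$. This must be read inside the Filippov framework already adopted for~\eqref{eq:sys}: the bound on $\dot V$ is claimed (and only needed) on $\{\bm{s}\neq\bm{0}\}$, where $\bm{u}$ and $\nabla V$ are classical, and the settling-time estimate then follows from the comparison lemma precisely as in the proofs of Theorems~\ref{thm:predefined} and~\ref{thm:predefined_ub}. One should also note that $\kappa'(\norm{\bm{s}})$ is finite and positive for $\bm{s}\neq\bm{0}$ (so the first term of~\eqref{eq:cont_cont} is well defined off the origin), the value $+\infty$ being admissible only in the limit $\norm{\bm{s}}\to0$, where the corresponding term is taken to be $0$ and plays no role in the argument.
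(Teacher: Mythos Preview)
Your proposal is correct and follows essentially the same approach as the paper: take $V(\bm{s})=\norm{\bm{s}}$, differentiate along~\eqref{eq:reduced_sys}--\eqref{eq:cont_cont}, bound the perturbation via Cauchy--Schwarz, and show that the residual $-\rho_3\tfrac{V}{V+\rho_4}+\delta\leq 0$ exactly when $V\geq\tfrac{\delta\rho_4}{\rho_3-\delta}$, after which Theorems~\ref{thm:predefined_ub} and~\ref{thm:predefined} apply. The only cosmetic difference is that the paper first rewrites $\tfrac{\bm{s}}{\norm{\bm{s}}+\rho_4}=\barpowvec{\bm{s}}^{0}-\tfrac{\rho_4\barpowvec{\bm{s}}^{0}}{\norm{\bm{s}}+\rho_4}$ before taking the inner product, arriving at the algebraically equivalent bound $\dot V\leq -\tfrac{1}{(1-\rho_2)\rho_1}\tfrac{\kappa(V)^{\rho_2}}{\kappa'(V)}-\bigl(\rho_3-\delta-\tfrac{\rho_3\rho_4}{\norm{\bm{s}}+\rho_4}\bigr)$; your extra remarks on parameter selection and the Filippov reading at $\bm{s}=\bm{0}$ are sound additions not spelled out in the paper.
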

\begin{proof} Consider the the continuous, positive definite and radially unbounded function $V(\bm{s})=\norm{\bm{s}}$, and let $T_c,b\in\mathbb{R}_+$. The time-derivative of $V$ along the trajectories of the closed-loop system~\eqref{eq:reduced_sys}-\eqref{eq:cont_cont} is $\dot{V}(\bm{s})=\barpowvec{\bm{s}^T}^{0}\left[-\frac{1}{(1-\rho_2)\rho_1}\frac{\kappa(\norm{\bm{s}})^{\rho_2}}{\kappa'(\norm{\bm{s}})}\barpowvec{\bm{s}}^{0}\\-\rho_3\frac{\bm{s}}{\norm{\bm{s}}+\rho_4}+\bm{\Delta}(\bm{x},t)\right]$. Note that:
\begin{itemize}
\item[\textit{(a)}] the continuous approximation of the unit-control term satisfies $\frac{\bm{s}}{\norm{\bm{s}}+\rho_4}=\barpowvec{\bm{s}}^0-\frac{\rho_4\barpowvec{\bm{s}}^0}{\norm{\bm{s}}+\rho_4}$;
\item[\textit{(b)}] the product $\barpowvec{\bm{s}^T}^0\barpowvec{\bm{s}}^0=1$;
\item[\textit{(c)}] by the Cauchy-Schwarz inequality $\barpowvec{\bm{s}^T}^{0}\bm{\Delta}\leq\delta$.
\end{itemize}
Therefore, considering \textit{(a), (b)} and \textit{(c)}, the time derivative $\dot{V}(\bm{s})$ results in
\begin{equation*}
\dot{V}(\bm{s})\leq -\frac{1}{(1-\rho_2)\rho_1}\frac{\kappa(V(\bm{s}))^{\rho_2}}{\kappa'(V(\bm{s}))}-\left(\rho_3-\delta-\frac{\rho_3\rho_4}{\norm{\bm{s}}+\rho_4}\right).
\end{equation*}
Since $\rho_3-\delta-\frac{\rho_3\rho_4}{\norm{\bm{s}}+\rho_4}\geq 0 \iff \norm{\bm{s}}\geq\frac{\delta\rho_4}{\rho_3-\delta}$, then
\begin{equation}\label{eq:proof}
\dot{V}(\bm{s})\leq -\frac{1}{(1-p)T_c}\frac{\kappa(V(\bm{s}))^{p}}{\kappa'(V(\bm{s}))}, \quad \text{for } \norm{\bm{s}}\geq\mu,
\end{equation}
with $T_c=\rho_1$, $\mu=\frac{\delta\rho_4}{\rho_3-\delta}$ and $p=\rho_2$.

From the above and using Theorem~\ref{thm:predefined_ub}, the solutions of the closed-loop system~\eqref{eq:reduced_sys}-\eqref{eq:cont_cont} satisfy $\norm{\bm{\Phi}(t,\bm{s}_0)}\leq b=\mu=\frac{\delta\rho_4}{\rho_3-\delta}$ for all $t \geq T_c=\rho_1$.

Moreover, if $\rho_4=0\Longleftrightarrow\mu=0$, inequality~\eqref{eq:proof} holds for all $\bm{s}\in\mathbb{R}^m$. Hence, by Theorem~\ref{thm:predefined}, the origin $\bm{s}=\bm{0}$ of the closed-loop system~\eqref{eq:reduced_sys}-\eqref{eq:cont_cont} is predefined-time stable.
\end{proof}

It is worth to notice that the predefined time $T_c=\rho_1$ and the predefined bound $b=\frac{\delta\rho_4}{\rho_3-\delta}$ can be selected independently since they depend on different parameters.


\begin{example} Let $\bm{x}=[x_1 \quad x_2]^T\in\mathbb{R}^2$ be a point in the plane whose dynamics is given by $\dot{\bm{x}}=\bm{u}$, where $\bm{u}=[u_1 \quad u_2]^T\in\mathbb{R}^2$ is the feedback control signal to be designed so that the point $\bm{x}$ tracks a desired reference trajectory $\bm{r}(t)=[r_1(t) \quad r_2(t)]^T:\mathbb{R}_{\geq0}\to\mathbb{R}^2$. Whereas the reference trajectory signal is assumed to be known, its derivative $\dot{\bm{r}}(t)$ is assumed to be unknown but bounded of the form $\sup_{t\in\mathbb{R}_{\geq0}}\norm{\dot{{\bm{r}}}(t)}\leq\delta$.
This model is a classic example in teleoperation tasks, such as haptic interfaces, remote command of manipulators, land, aerial and underwater robots, to name a few, where the position reference is given in real-time by the user, but the desired velocity reference is unknown.
The dynamics of the error variable, $\bm{s}(\bm{x},t)=\bm{x}-\bm{r}(t)$, is then $\dot{\bm{s}}=\bm{u}-\dot{\bm{r}}(t)$. Under all the above assumptions, the feedback control signal $\bm{u}$ can be designed as~\eqref{eq:cont_cont} in Corollary~\ref{cor:cont_cont}.

For simulation purposes, the function $\kappa(r)$ in~\eqref{eq:cont_cont} is selected as $\kappa(r)=I\left(m_{\rho_7},m_{\rho_8},\frac{\rho_6 r^{\rho_8-\rho_7}}{\rho_6 r^{\rho_8-\rho_7}+\rho_5}\right),$ with $m_{\rho_7}=\frac{1-\rho_9\rho_7}{\rho_8-\rho_7}>0$, $m_{\rho_8}=\frac{\rho_9\rho_8-1}{\rho_8-\rho_7}>0$, $\rho_5,\rho_6,\rho_9>0$ and $0<\rho_9\rho_7<1<\rho_9\rho_8$. Furthermore, the reference signal is selected as $\bm{r}(t)=[\cos(2\pi t) \quad \sin(2\pi t)]^T$ (i.e., the point $\bm{x}$ is required to follow a circumference of radius $1$), whose derivative norm is $\norm{\dot{\bm{r}}(t)}=2\pi=\delta$. Moreover, the following simulations are conducted using the Euler integration method, with a fundamental step size of $1\times10^{-5}$~s. Finally, the initial conditions of the system are set as $\bm{x}_0=[x_{10} \quad x_{20}]=x_0[1 \quad 1]$, with $x_0$ taking the values of $10^1$, $10^3$ and $10^{21}$.

\textit{Part I}: assume that the tracking error $\bm{s}$ is required to reach a vicinity of the origin with a radius of measure $b=0.01$ in at most $T_c=1$ time units. To this end, the parameters of controller~\eqref{eq:cont_cont} are selected as $\rho_1=1$, $\rho_2=0$, $\rho_3=4\pi$, $\rho_4=0.01$, $\rho_5=1$, $\rho_6=1$, $\rho_7=0.9$, $\rho_8=1.1$, and $\rho_9=1$. Note that, with this parameter selection, controller~\eqref{eq:cont_cont} is continuous.

\begin{figure}[!t]
\centering
\includegraphics[width=3in]{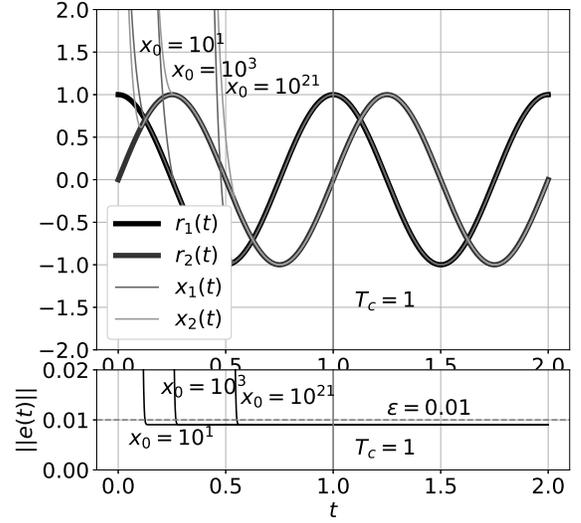}
\caption{Several trajectories of the state variables $x_1,x_2$ vs. $t$, and $\norm{\bm{e}}$ vs. $t$. Continuous controller.}
\label{fig:control_cont}
\end{figure}

\begin{figure}[!t]
\centering
\includegraphics[width=3in]{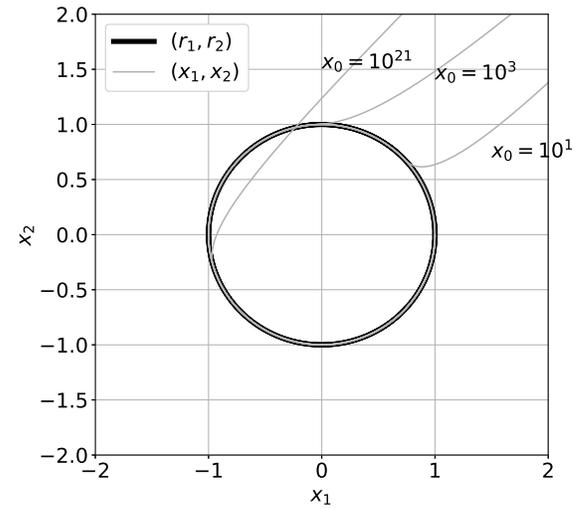}
\caption{Several trajectories of the point $(x_1,x_2)$ in the plane. Continuous controller.}
\label{fig:plane_cont}
\end{figure}

\textit{Part II}: assume that the tracking error $\bm{s}$ is required to reach the origin in at most $T_c=1$ time units. To this end, the parameters of  controller~\eqref{eq:cont_cont} are selected as $\rho_1=1$, $\rho_2=0$, $\rho_3=4\pi$, $\rho_4=0$, $\rho_5=1$, $\rho_6=1$, $\rho_7=0.9$, $\rho_8=1.1$, and $\rho_9=1$. Note that, with this parameter selection, controller~\eqref{eq:cont_cont} is discontinuous.

\begin{figure}[!t]
\centering
\includegraphics[width=3in]{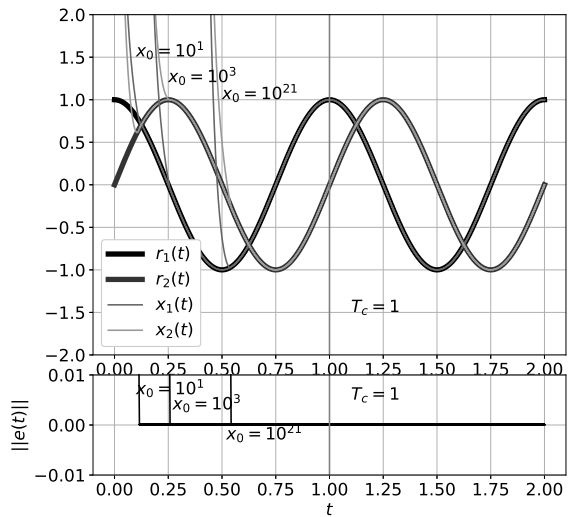}
\caption{Several trajectories of the state variables $x_1,x_2$ vs. $t$, and $\norm{\bm{e}}$ vs. $t$. Discontinuous controller.}
\label{fig:control_disc}
\end{figure}

\begin{figure}[!t]
\centering
\includegraphics[width=3in]{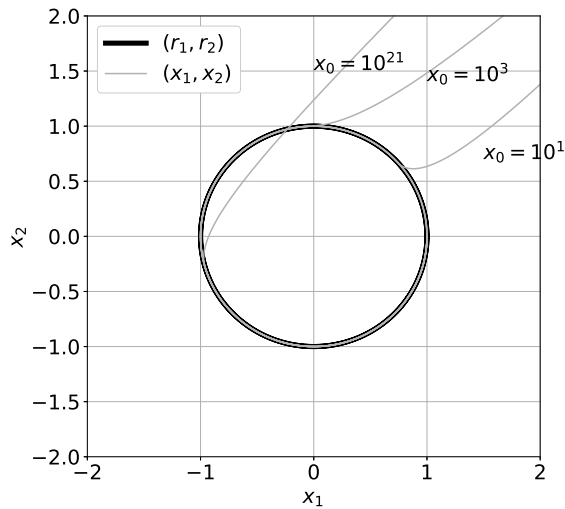}
\caption{Several trajectories of the point $(x_1,x_2)$ in the plane. Discontinuous controller.}
\label{fig:plane_disc}
\end{figure}

Figs.~\ref{fig:control_cont} and \ref{fig:control_disc} show the state variables and the reference signal over time, respectively, for the continuous ($\rho_4=0.01$) and the discontinuous ($\rho_4=0$) cases. They also show the tracking error norm $\norm{\bm{s}}$ for a better appreciation of the convergence to the $0.01$-vicinity of the origin, in the continuous case, or the convergence to the origin, in the discontinuous case. Figs.~\ref{fig:plane_cont} and \ref{fig:plane_disc} show the trajectories of the point $(x_1,x_2)$ in the plane, respectively, for the continuous and the discontinuous cases.
\end{example}

\section{Conclusion}
This technical note was devoted to the study of sufficient Lyapunov-like conditions to ensure a class of dynamic systems to exhibit a predefined time stability property. The introduction of class $\mathcal{K}^1$ allowed to establish equivalence with previous Lyapunov-like theorems for predefined-time stability for autonomous systems ~\cite{Sanchez-Torres2018,Sanchez-Torres2018a,Aldana-Lopez2019,Jimenez-Rodriguez2018a}. Moreover, the derived Lyapunov theorem was extended for analyzing predefined-time ultimate boundedness with predefined bound.

On the other hand, the developed framework was used to design a class of robust controllers for uncertain affine control systems. This class of controllers can be continuous, providing predefined-time ultimate boundedness of the solutions, or discontinuous, providing predefined-time stability to some desired manifold. Finally, the theoretical findings were validated through a numerical simulation, which reveals the effectiveness of the proposed control scheme.

As future work, the Lyapunov-like conditions introduced in Theorems~\ref{thm:predefined}-\ref{thm:predefined_ub}, require further research to exploit them for the controller design of particular classes of nonlinear systems.



\section*{Acknowledgment}

Esteban Jim\'enez acknowledges to CONACYT--M\'exico for the D.Sc. scholarship number 481467 and the project 252405.
Aldo Jonathan acknowledges to CONACYT--M\'exico for the Project C\'atedras 1086 ``Ambientes Inteligentes''.

\bibliographystyle{IEEEtran}
\bibliography{predefined}

\end{document}